\def \figurespath {figs/} 
\def \figextension {pdf}
\newif\ifcompiletikz
  \newcommand{\inputtikz}[1]{
      \tikzsetnextfilename{#1}   %set the external file name 
      \input{\figurespath#1.\tikzextension}
  }
\pgfplotsset{compat=newest}
  \newcommand{\inputtikz}[1]{\includegraphics{\figurespath#1.\figextension}}
\newtheorem{theorem}{Theorem}
\newtheorem{lemma*}[theorem]{Lemma}
\newtheorem{definition}[theorem]{Definition}
\newtheorem{proposition}[theorem]{Proposition}
\newtheorem{claim}{Claim}
\def\C{\mathcal{C}}
\def\F{\mathbb{F}}
\def\phi{\varphi}
\begin{document}

\title{Irregular Product Codes}
\author{Masoud Alipour, Omid Etesami, Ghid Maatouk, Amin Shokrollahi}
\maketitle

\begin{abstract}
%\boldmath
We consider irregular product codes.
In this class of codes, each codeword is represented by a matrix.
The entries in each row (column) of the matrix should come from a component row (column) code.
As opposed to (standard) product codes, 
we do not require that all component row codes nor all component column codes be the same.
As we will see, relaxing this requirement can provide some additional attractive features
including 
1) allowing some regions of the codeword be more error-resilient 
2) allowing a more refined spectrum of rates for finite-lengths and improved performance in some of these rates
3) more interaction between row and column codes during decoding.

We study these codes over erasure channels.
We find that for any $0 < \epsilon < 1$, 
for many rate distributions on component row codes,
there is a matching rate distribution on component column codes 
such that an irregular product code based on MDS codes with those rate distributions on the component codes has asymptotic rate $1 - \epsilon$ and can decode
on erasure channels (of alphabet size equal the alphabet size of the component MDS codes)
with erasure probability $< \epsilon$.
\end{abstract}

\section{Introduction}

Product codes were introduced in 1954 by Elias \cite{elias_error-free_1954}. 
A product code can be viewed as a special case of Tanner construction
\cite{tanner_recursive_1981} in which smaller constituent codes make a larger
code with low complexity decoding.
An $m \times n$ product code is defined by a \textit{row code} $C$ of
length $n$ and rate $r_C$, and a \textit{column code} $C'$ of length $m$ and
rate $r_{C'}$. Codewords are represented by $m \times n$ matrices which
satisfy the constraint that every row belongs to $C$ and every column to
$C'$. Product codes are decoded in an iterative fashion, where rows and
columns are recovered in successive rounds using the decoders for $C$ and
$C'$. The rate of the product code is the product of the rates $r_C$ and
$r_{C'}$.

In this work, we present irregular product codes, a generalization of product
codes in which we do not require that the rows (columns) belong to a single
code. We will show that while these codes still retain the advantages of
product codes, they present some additional attractive features.

One of the main advantages of product codes is the fact that decoding takes
place over the smaller component codes, which can result in a speedup
of decoding. Furthermore, 
%for codes such as Reed-Solomon codes which require
%that the field size be at least as large as the code length, 
by
combining Reed-Solomon component codes, one can obtain product codes which
have length equal to the square of the size of the component codes for the same field size,
while taking advantage of the MDS properties of the small component codes.

Another (more application-specific) feature of product  codes is that they
perform well on bursty channels. Indeed, for a product code which is transmitted
row by row, a burst error will corrupt several consecutive rows but spread
evenly over columns, thus allowing the column codes to recover the corrupted
entries.

Irregular product codes are based on the simple idea that we need not restrict
ourselves to a single row and column code, but instead allow row and column
codes of multiple rates. The intuition behind this is that allowing for a few
low-rate, highly error-resilient codes might boost the decoding process, while
other high-rate codes ensure that the overall irregular product code has good
rate. With a careful design of the rate distributions, one can hope to achieve
better performance than for regular product codes. 
%[\textbf{can mention here}
Irregularity has been a powerful concept in many contexts; e.g., irregular
degree distributions for LDPC codes, LT codes, etc.
This idea fully exploits the inherently interactive nature of the decoding of
product codes. Indeed, round-based decoding of product codes lets some rows and
columns ``help'' others to recover and go on with the decoding process. 
Allowing
for various decoding capabilities for different rows and columns only taps
further into this property of the decoder. 
\footnote{Indeed, in product codes that achieve rates close to Shannon limit (say on erasure channels),
either the row/column code (say row code) should have rate close to 1.
In this case, the decoding happens first in the column codes whose rate is far from 1,
and then the row codes play a ``complementary" role.
As we will see, there exist irregular product codes with rate vs.\ decoding capacity matching
these product codes in which the row codes and column codes have
the same distribution of rates, and the decoding process involves a longer and
gradual interaction between row and column component codes.}

Irregular product codes retain the advantages of product codes, while
presenting additional features that make them more attractive. Decoding still
takes place over smaller codes and the field size is still allowed to grow
slower in the case of MDS component codes. Further, not only do irregular
product codes still perform well on bursty channels, they can also be more
powerful than regular product codes when some parts of the codeword are known to
be more vulnerable to bursts than others, since the row and column codes
error-correction capabilities are tunable.

Moreover, for short-length linear codes, there do not exist product codes of
every desirable dimension, since fixing the dimension of the product code leaves
few choices for the dimensions of the component codes. Irregular product codes,
on the other hand, allow for many more dimensions due to the numerous choices
for the rate distribution of the component codes.

In this work, we first derive bounds on the rate and minimum distance
of irregular product codes, and give constructions that achieve these bounds.
We then give explicit families of irregular product codes that can get
rates arbitrarily close to $1 - \epsilon$ on channels with erasure
$\epsilon$ based on MDS component codes. 
Note however that this does not mean that these codes are capacity-approaching in the sense of Shannon capacity 
because the field size for MDS codes can grow as a function of the length.\footnote{On the other hand,
one can show that our analysis can be extended to the situation where instead of MDS codes as component codes,
we use capacity-approaching codes of the same rate but
over a fixed erasure channel, say BEC. 
In this case, the resulting product code will be truly capacity-approaching.}

We give simulation results
for finite-length codes that show that irregular product codes have
better thresholds than product codes of the same dimension or close dimension
for some specific lengths.

\subsection{Related works}
Since the introduction of the product codes \cite{elias_error-free_1954} many
extensions have been proposed and these codes have found many applications from
magnetic recording \cite{chaichanavong_tensor-product_2006} to deep space
communication \cite{baldi_class_2009} mainly because of their simple construction and low
complexity decoding.

The use of different component codes for rows and different component codes for columns 
is not new. 
In fact, \cite{stankovic_joint_2002} and \cite{cao_novel_2003} consider product codes
for image transmission
where the rows are LDPC codes and the columns are RS codes with different rates. They determine the optimum rate of the RS codes by a dynamic
programming.

However, to the best of our knowledge,
irregular product codes with the generality considered in this paper 
together with some of their asymptotic behavior have not been previously similarly explored.

%Furthermore, .... [other works involving irregular products].

Multidimensional product codes are investigated in \cite{rankin_single_2001} 
and \cite{rankin_asymptotic_2003}. However, the component codes are restricted
to be single parity and extended Hamming codes. In these papers, the authors devise a low complexity
soft decoding algorithm for AWGN channels.

The weight distribution of some instances of product codes is known. For
example, \cite{chiaraluce_extended_2004} analyzes the error floor region of an
extended Hamming product code by means of the weight enumerator of the code and the
 union bound. Some characterization of the stopping sets over the erasure
channel is obtained in \cite{rosnes_stopping_2008} based on the minimum
distance of the component codes. \cite{blankenship_block_2005} tries to optimize
the design of a product code where the component codes are limited to
single parity codes and certain extended Hamming and BCH codes. 

Product codes 
can be decoded iteratively using a message passing algorithm in noisy channels.
Because of this, they are also referred to as turbo block codes \cite{pyndiah_near-optimum_1998}
in the literature of coding theory.

The Tanner graph of the product code is regular. \cite{lentmaier_product_2010}
considers product codes as structured generalized LDPC codes. 
%\cite{baldi_class_2009,baldi_interleaved_2012}

For a thorough survey on product codes refer to \cite{kschischang_product_2003}.

\subsection{Organization of the Paper}
The remainder of the paper is organized as follows. 
In Section~\ref{sec:Def}, we define
irregular product codes.
In Section~\ref{sec:Rate}, we derive an upper bound on their dimension, and prove
that under certain conditions, this upper bound can be achieved. 
In Section~\ref{sec:Dist}, we also derive
a lower bound on the minimum distance of irregular product codes and 
show that sometimes this lower bound is achieved.
In
Section~\ref{asymptoticSection}, we turn to the asymptotic analysis of irregular product codes on erasure channels
under the iterative decoding which switches back and forth between rows and columns.
In Section~\ref{sec:Cap}, 
we give explicit
families of irregular product codes based on MDS component
codes that achieve rates close to what capacity-achieving codes achieve. 
Finally, in Section~\ref{sec:Finite}, we give some irregular product code
constructions for specific code lengths and show by simulation that these
constructions outperform regular product code of the same (or approximately the
same) dimension.

\section{Definition}\label{sec:Def}
We denote the set $\{1, \ldots, m\}$ by $[m]$.

\begin{definition}\label{def:gpc}
Let $\F$ be a field and let $m, n$ be positive integers.
For each $i  \in [m]$ let $C_i$ be a code of length $n$ over $\F$ 
and  for each $j \in [n]$ let $C'_j$ be a code of length $m$ over $\F$.

The $m \times n$ \textit{irregular product code} $\C = \C(\{C_i\}_i,\{C'_j\}_j)$
is the code of length $mn$ over $\F$ such that 
\[\C = \{ (c_{ij})_{i \in [m], j \in [n]} | \forall i \  (c_{i1},\ldots,c_{in}) \in C_i; \forall j, (c_{1j},\ldots,c_{mj}) \in C'_j\}.\]
\end{definition}
In the above definition, when all the codes $C_i$ corresponding to the rows are equal and all the codes $C'_j$ corresponding to the columns are equal, we obtain a standard product code.

\section{Rate of Irregular Product Codes}\label{sec:Rate}

\begin{theorem}\label{rateTheorem}
Consider an $m \times n$ irregular product code $\C = \C(\{C_i\}_i,\{C'_j\}_j)$. 
Let $0 \le a_1 \le \ldots \le a_m \le n$ and $0 \le b_1 \le \ldots \le b_n \le m$ be two integer sequences.
For $i\in [m]$, assume that the value of the first $a_i$ coordinates of any codeword in $C_i$ can generate the remaining coordinates
(in the sense that the values of these remaining coordinates are a function of the values of the first $a_i$ coordinates). 
Similarly, for each $j \in [n]$, assume that the first $b_j$ coordinates of any codeword in $C'_j$ can generate the remaining coordinates. 
Then
\begin{enumerate}
 \item $\C$ has dimension at most 
\begin{equation}\label{eqn:dim}
  k_\C := \sum_{j=1}^n \sum_{i=b_{j-1}+1}^{b_j} \max(a_i - j + 1, 0),
\end{equation}
where we define $b_0 := 0$.
 \item If furthermore for all $i \in [m], j \in [n]$, $C_i$ is a linear code of dimension $a_i$ and
 $C'_j$ is a linear code of dimension $b_j$,
 and
 $C_1 \subseteq \cdots \subseteq C_m$ and $C'_1 \subseteq \cdots \subseteq C'_n$, 
then $\C$ has dimension exactly $k_\C$ as given by (\ref{eqn:dim}). 
\end{enumerate}
\end{theorem}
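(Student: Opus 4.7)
For both parts, I would first distill the double sum into a single combinatorial object: the set $T := \{(i,j) \in [m]\times[n] : i \le b_j \text{ and } j \le a_i\}$. A short reindexing---grouping rows by the unique index $J(i)$ with $b_{J(i)-1} < i \le b_{J(i)}$---shows that $|T| = \sum_{i \le b_n} \max(a_i - J(i) + 1, 0)$, which is exactly $k_\C$. So both parts amount to controlling the restriction map $\Phi : \C \to \F^T$, $c \mapsto c|_T$: injectivity for Part 1, and, under the extra hypotheses, surjectivity for Part 2.

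For Part 1, I would establish injectivity of $\Phi$ by induction on the column index. Suppose $c, c' \in \C$ agree on $T$ and also on columns $1, \ldots, j-1$. To show they agree on column $j$ it suffices to show agreement on the first $b_j$ entries, after which the column code $C'_j$ forces the rest. For each $i \le b_j$, either $(i,j) \in T$ (immediate), or $j > a_i$, in which case the row code $C_i$ expresses $c_{ij}$ as a function of $c_{i,1},\ldots,c_{i,a_i}$---entries in columns $\le a_i < j$ where induction already applies. This closes the induction and yields $|\C| \le |\F|^{k_\C}$.

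For Part 2, I would exhibit $|T|$ linearly independent codewords by an outer-product construction. For each $(i,j) \in T$, use $j \le a_i$ and $\dim C_i = a_i$ to pick $g \in C_i$ with $g_j = 1$ and $g_{j'} = 0$ for all other $j' \le a_i$; similarly pick $h \in C'_j$ systematic at position $i$. Set $c^{(i,j)} := h\, g^\top$, the rank-one matrix with entries $h_{i'} g_{j'}$. Each row of $c^{(i,j)}$ is a scalar multiple of $g \in C_i$: for $i' \ge i$ it lies in $C_{i'}$ by the chain $C_i \subseteq C_{i'}$, and for $i' < i$ the systematic property of $h$ (together with $i' \le b_j$, which follows from $i' < i \le b_j$) forces the scalar to be zero. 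Columns are handled symmetrically via $C'_j \subseteq C'_{j'}$ and the systematic property of $g$. Hence each $c^{(i,j)} \in \C$. Finally, ordering $T$ lexicographically in $(i,j)$, the restrictions $c^{(i,j)}|_T$ form an upper-triangular matrix with ones on the diagonal---any strictly earlier position $(i',j') \in T$ has either $i' < i$, forcing $h_{i'} = 0$, or $i' = i$ with $j' < j$, forcing $g_{j'} = 0$. This gives $\dim \C \ge k_\C$, matching Part 1.

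The main obstacle is the verification that each $c^{(i,j)}$ actually belongs to $\C$; this is the one place nestedness is used and it enters in a delicate way. A more naive alternative---picking $v \in \F^T$ arbitrarily and trying to ``fill in'' the remaining entries by alternating row- and column-code extensions---would work but forces a consistency check at positions where a row code and a column code simultaneously try to determine an entry. The outer-product construction sidesteps that by concentrating all the freedom of a basis codeword on one row-code element and one column-code element, so that the chains $C_i \subseteq C_{i'}$ and $C'_j \subseteq C'_{j'}$ alone suffice.
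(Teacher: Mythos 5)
Your proof is correct and takes a genuinely different route from the paper's. For Part 1, both arguments ultimately isolate the same index set $T = \{(i,j) : i \le b_j,\ j \le a_i\}$, but the paper reaches it through an iterative row/column marking procedure (plus a separate counting claim that the marked ``generating'' coordinates number $k_\C$), whereas you give the closed form directly and prove injectivity of $\Phi$ by a clean column-by-column induction, avoiding the procedural bookkeeping. The real divergence is in Part 2: the paper runs its marking procedure as a systematic encoder and argues by a ``first violating entry'' contradiction, using bilinear identities of the form $c_{ij} = \sum \alpha_{i'}\beta_{j'} c_{i'j'}$, that the fill-in never breaks a row or column constraint --- exactly the consistency check you correctly identify and deliberately avoid. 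Your outer-product basis $c^{(i,j)} = h\,g^{\top}$ exhibits $|T|$ independent codewords directly; nestedness enters only to place $g$ (resp.\ $h$) in every later row (resp.\ column) code, and linearity absorbs the zero rows and columns. The trade-offs are modest: the paper's route yields an explicit systematic encoder as a by-product, while yours yields an explicit rank-one basis and a cleaner triangularity argument for linear independence. Both are sound uses of the nesting hypothesis, applied in different directions.
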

\begin{proof}
The coordinates of a codeword in $\C$ are all pairs $(i,j) \in [m] \times [n]$.
In the following, we will describe a procedure that returns some subset of these coordinates as ``generating coordinates''. As we shall see and as their name suggests, one can generate the remaining coordinates of a codeword in $\C$ from these coordinates. The number of these generating coordinates will be an upper bound on the dimension of $\C$. This is only an upper bound because there might be some settings of these generating coordinates that do not give rise to valid codewords.

In this procedure, initially all coordinates are unmarked. Each coordinate will eventually be marked either as ``generating'' or as ``determined''. A row (column) where not all coordinates have been marked is called ``available''. A row (column) whose marked coordinates can generate the values of the remaining unmarked coordinates is called ``determined''.

While there exists an unmarked coordinate
\begin{description}
\item[(A)] if there exists an available determined row
\begin{itemize}
\item pick the available determined row with the smallest index
\item mark its unmarked coordinates as ``determined''
\end{itemize}
 \item[(B)]   else if there exists an available determined column
\begin{itemize}
\item pick the available determined column with the smallest index
\item mark its unmarked coordinates as ``determined''
\end{itemize}
\item[(C)]        else
\begin{itemize}
\item pick the available row with the smallest index
  \item          starting from the smallest unmarked index,
            mark as many coordinates as is necessary as ``generating'' until
            the first $a_i$ coordinates are marked 
   \item         mark the remaining coordinates as ``determined''.
\end{itemize}
\end{description}

\begin{claim}\label{Claim1}
In the above procedure, the number of coordinates finally marked as generating are 
$$\displaystyle \sum_{j=1}^n \sum_{i=b_{j-1}+1}^{b_j} \max(a_i - j + 1, 0).$$
\end{claim}
\begin{proof}[Proof (of Claim~\ref{Claim1})]
For each row $i$, we count the number of generating coordinates in row $i$:

Assume $i > b_n$.
During the procedure, rows with smaller index become determined earlier and hence become fully marked earlier also.
Thus, before the procedure executes on row $i$, 
all rows $1, \ldots, i-1$ have been fully marked, 
hence all the columns are determined. 
Hence, 
the columns one-by-one cause the procedure to go through (B),
until row $i$ becomes determined,
at which point the procedure goes through (A) on row $i$.
Hence, no coordinate in row $i$ is ever going to be marked as generating.

Now assume $i \le b_n$ and consider the greatest $j$ such that $b_{j-1} < i$.
Before 
the procedure executes on row $i$,
all rows $1, \ldots, i-1$ have become fully marked,
hence all columns $1, \ldots, j-1$ are determined.
If $a_i < j$, all these columns cause the procedure to go through (B) one-by-one 
until row $i$ becomes determined,
at which point the procedure goes through (A) on row $i$. 
Hence, in this case, no coordinate in row $i$ is ever going to be marked as generating.
If, on the other hand, $a_i \ge j$ then
consider coordinate $(i,j)$.
Since the coordinates marked in a column are always a prefix of the column and since column $j$ requires $b_j \ge i$ marked coordinates to become determined,
the coordinate $(i, j)$ cannot be marked through (B) on column $j$ (instead of through row $i$).
By a similar argument, all coordinates $(i,j+1), \ldots, (i, n)$ are going to be marked through row $i$ (rather than through their columns).
This implies that when the procedure executes on row $i$,
coordinates $(i,j), \ldots, (i, a_i)$ are not yet marked,
and so the procedure goes through (C) on $i$, 
and exactly these coordinates are marked as generating.

In other words, the number of generating symbols in row $i$ is $\max(a_i - j + 1, 0)$.
This completes the proof of Claim~\ref{Claim1}.
\end{proof}

\vspace{3mm}

From the way the $k_\C$ generating coordinates were chosen, it is clear 
that the value of a codeword of $\C$ is a function of its value at these $k_\C$ coordinates. 
This finishes the proof of part 1 of Theorem~\ref{rateTheorem}.
To prove part 2 of Theorem~\ref{rateTheorem}, we show that under the conditions of part 2, 
the above procedure naturally gives rise to a systematic encoding algorithm for code $\C$:
When it marks a coordinate as generating, it can place an information symbol in this coordinate; when it marks a coordinate as determined while executing on a row (column), the value at this coordinate is generated from the generating coordinates of this row (column) according to the corresponding row (column) code.
We only need to show that any setting of the $k_\C$ generating coordinates gives rise to a valid codeword of $\C$. 

This algorithm begins with an empty $m \times n$ matrix $(c_{ij})$ corresponding to a codeword and fills its entries until all entries are filled and we have a matrix $(c_{ij}) \in \F^{m \times n}$.
To show that the final matrix $(c_{ij})$ is a valid codeword, 
we prove by induction on the number of steps of the algorithm that
$(c_{ij})$ never 
 \emph{violates} any row code or column code. 
By that we mean that for every row $i$ (column $j)$, 
at any point during the algorithm
the filled entries in row $i$ (column $j$) are 
a projection of a valid codeword in $C_i$ ($C'_j$) on these entries;
in other words, these filled entries
do not satisfy any linear constraint that is not satisfied by $C_i$ ($C'_j$).

No entry $c_{ij}$ will ever violate its row code. A proof of this claim goes as follows:
Since determined rows are given precedence over determined columns,
$c_{ij}$ is never filled through column $j$ if $j > a_i$. 
Indeed, if $j > a_i$, row $i$ must have been already determined at the point where $c_{ij}$ is filled.
It means that when row $i$ is picked by the algorithm, at most its first $a_i$ entries are filled.
Since $C_i$ is generated by its first $a_i$ coordinates and has dimension $a_i$, 
any setting of these coordinates will correspond to the projection of some valid codeword on its first $a_i$ coordinates. 

Thus, the only case we need to consider is that of an entry $c_{ij}$ violating its column code $C'_j$ (when row $i$ is being filled). We claim that this can also never happen. Let $c_{ij}$ be the first entry that violates its column code so that for all $i'< i$ and all $j' < j$, the entries $c_{i'j'}, c_{i'j}$, and $c_{ij'}$ do not violate their respective column codes. 
As $c_{ij}$ violates $C'_j$, we must have that $b_j < i$. Since $C'_j$ has
dimension $b_j$ and is generated by its first $b_j$ coordinates, there exists
$(\beta_1, \ldots, \beta_{i-1}) \in \F^{i-1}$ such that for any valid codeword
$(y_1, \ldots, y_n)$ of $C'_j$, we have
$y_i = \langle \beta, y_{1 \cdots i-1}\rangle$ but 
\begin{equation} \label{cij}
c_{ij} \ne \langle \beta, c_{1 \cdots i-1, j} \rangle.
\end{equation}

Since $C'_1 \subseteq \cdots \subseteq C'_j$, this implies that for each of the first $j-1$ columns, its first $i$ coordinates correspond to the projection of a valid codeword of $C'_j$ on its first $i$ coordinates.
Thus, for each $j' < j$, we have that 
\begin{equation}\label{cij'}
 c_{ij'} = \langle \beta, c_{1\cdots i-1,j'}\rangle.
\end{equation}

On the other hand, since $C_1 \subseteq \cdots \subseteq C_i$ and $j < a_i$, a similar argument shows that there exists a vector $\alpha \in \F^{j-1}$ such that for all $i' \leq i$, 
\begin{equation}\label{ci'j}
 c_{i'j} = \langle \alpha, c_{i',1\cdots j-1}\rangle.
\end{equation}
Using (\ref{cij'}) and (\ref{ci'j}), we see that $c_{ij} = \sum_{1 \le i' < i,1 \le j' < j} \alpha_{i'} \beta_{j'} c_{i'j'}$. But using (\ref{cij}) and (\ref{cij'}), we see that $c_{ij} \ne \sum_{1 \le i' < i, 1 \le j' < j} \alpha_{i'} \beta_{j'} c_{i'j'}$. 
This contradiction shows that 
no $c_{ij}$ violates a column code.
\end{proof}

\section{Minimum Distance of Irregular Product Codes}\label{sec:Dist}
The following theorem gives the best general lower bound on the minimum distance of an irregular product code
in terms of the minimum distances of the individual row and column codes.
Notice that this does not preclude the possibility of obtaining better lower bounds 
if we know more about the row and column codes.
\begin{theorem}\label{thm:MinDist}
For two integer sequences $n \ge d_1 \ge \ldots \ge d_m \ge 1$ and 
$m \ge d'_1 \ge \ldots \ge d'_n \ge 1$,
define $$ D = \min_{1 \le i \le m - d_j + 1; 1 \le j \le n - d_i + 1} \ 
\max_{i - 1 \le i' \le m;  j - 1 \le j' \le n} - (i' - i + 1) (j' - j + 1) +
 \sum_{k=i}^{i'} d_k + \sum_{k=j}^{j'} d'_k.$$
The number $D$ is the minimum weight of a binary nonzero $m \times n$ matrix 
where every nonzero row $i$ has weight $\ge d_i$
and every nonzero column $j$ has weight $\ge d'_j$.
Therefore, if $\C = \C(\{C_i\}_i,\{C'_j\}_j)$ is an $m \times n$ product code
such that mindist$(C_i) = d_i$ and mindist$(C'_j) = d'_j$,
then mindist$(\C) \ge D$.
On the other hand, for any two sequences $n \ge d_1 \ge \ldots \ge d_m \ge 1$ and 
$m \ge d'_1 \ge \ldots \ge d'_n \ge 1$, there exist row codes $C_i$ and column codes $C'_j$ 
with mindist$(C_i) = d_i$ and mindist$(C'_j) = d'_j$, such that
mindist$(\C) = D$.
\end{theorem}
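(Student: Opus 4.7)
The plan is to recast the statement as a combinatorial question about $0/1$ matrices and solve that first: I aim to show that $D$ equals the minimum Hamming weight of a nonzero $m\times n$ binary matrix whose nonzero rows have weight at least $d_i$ and whose nonzero columns have weight at least $d'_j$. The bound $\mathrm{mindist}(\C)\ge D$ then follows automatically, because any nonzero codeword of $\C$ viewed as an $m\times n$ array meets exactly those row/column weight constraints (its nonzero rows being nonzero codewords of the corresponding $C_i$, its nonzero columns being nonzero codewords of the corresponding $C'_j$). What then remains is (a) proving both directions of the combinatorial characterization, and (b) constructing explicit component codes $C_i$, $C'_j$ whose irregular product code attains $D$.

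For the lower bound $w(M)\ge D$, the first step is a \emph{compaction}: using that $d_k$ and $d'_\ell$ are non-increasing, if the nonzero rows of $M$ are $r_1<\cdots<r_a$, then moving the nonzero content down into the suffix $[m-a+1,m]$ preserves the weight and the row constraint, since $r_s\le m-a+s$ and hence $d_{r_s}\ge d_{m-a+s}$; the same argument applies to columns. So I may assume the nonzero support has the form $[i^*,m]\times[j^*,n]$. Since row $i^*$ has weight $\ge d_{i^*}$ with all its ones in columns $\ge j^*$, one gets $j^*\le n-d_{i^*}+1$, and symmetrically $i^*\le m-d'_{j^*}+1$, so $(i^*,j^*)$ lies in the outer-min range of $D$. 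For any $(i',j')$ in the inner-max range, cut the nonzero support into four blocks at the crossbar $(i',j')$, letting $a,b,c,d$ denote their weights with $a$ the weight of the top-left block $[i^*,i']\times[j^*,j']$. Summing row weights across $[i^*,i']$ gives $a+b\ge\sum_{k=i^*}^{i'} d_k$; summing column weights across $[j^*,j']$ gives $a+c\ge\sum_{\ell=j^*}^{j'} d'_\ell$; and $a\le(i'-i^*+1)(j'-j^*+1)$. Adding the first two and using the third to eliminate $a$ yields $w(M)\ge\sum d_k+\sum d'_\ell-(i'-i^*+1)(j'-j^*+1)$. Maximizing over $(i',j')$, and noting that $(i^*,j^*)$ lies in the outer-min range, gives $w(M)\ge D$.

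For the matching upper bound and the code construction, take $(i_0,j_0)$ achieving the outer minimum and $(i'_0,j'_0)$ achieving the inner maximum at that point, and set $p=i'_0-i_0+1$, $q=j'_0-j_0+1$. I would build $M^*$ by filling the core $[i_0,i'_0]\times[j_0,j'_0]$ with all ones; for each row $k\in[i_0,i'_0]$ adding $d_k-q$ additional ones inside columns $[j'_0+1,n]$; for each column $\ell\in[j_0,j'_0]$ adding $d'_\ell-p$ additional ones inside rows $[i'_0+1,m]$; and leaving the bottom-right block zero. A direct count gives $w(M^*)=pq+\sum(d_k-q)+\sum(d'_\ell-p)=\sum d_k+\sum d'_\ell-pq=D$. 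The main technical obstacle is showing that the off-core ones can be placed so that every nonzero off-core row still has weight at least $d_k$ and every nonzero off-core column at least $d'_\ell$. Here I would exploit the optimality of $(i'_0,j'_0)$ to extract the inequalities $d_k\ge q$ for $k\in[i_0,i'_0]$ and $d_k\le q$ for $k>i'_0$, together with the symmetric inequalities $d'_\ell\ge p$ for $\ell\in[j_0,j'_0]$ and $d'_\ell\le p$ for $\ell>j'_0$ (each arising because shifting $i'_0$ or $j'_0$ by one cannot strictly increase the objective), and combine them with the monotonicity of $(d_k),(d'_\ell)$ via a Gale--Ryser--type balancing argument to realize the desired placement. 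Once such an $M^*$ is in hand, I would take each row code $C_i$ to be a linear code of minimum distance exactly $d_i$ containing the $i$-th row of $M^*$ (the one-dimensional span of that row when its weight is exactly $d_i$; over a sufficiently large alphabet, a suitable MDS code or a small enlargement thereof otherwise), and similarly for the column codes $C'_j$. Then $M^*\in\C$, so $\mathrm{mindist}(\C)\le w(M^*)=D$, which combined with the lower bound gives equality.
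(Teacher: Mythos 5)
Your reduction to the binary-matrix question and your lower-bound argument are correct, and the latter is a genuinely different (and arguably more transparent) route than the paper's: after the same compaction step that reduces the support to a down-right rectangle $[i^*,m]\times[j^*,n]$, the paper casts ``maximize the number of zeros'' as a max-flow problem and invokes min-cut $=$ max-flow, whereas you instead derive the bound directly from the three elementary inequalities on the four-block decomposition. That is a clean simplification. The paper's formulation, however, buys something for free that yours does not: because the max-flow LP has an integral optimum, the min-cut argument simultaneously \emph{produces} a binary matrix of weight exactly $D$, so achievability never needs a separate construction.

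This is exactly where your proposal has a genuine gap. You build $M^*$ with an all-ones core and then claim the off-core ones can be distributed using the inequalities $d_k\ge q$ (resp.\ $\le q$) for $k\le i'_0$ (resp.\ $>i'_0$) and $d'_\ell\ge p$ (resp.\ $\le p$) for $\ell\le j'_0$ (resp.\ $>j'_0$), ``via a Gale--Ryser--type balancing argument.'' Those single-shift inequalities (from moving $i'_0$ or $j'_0$ by one) are not sufficient. For instance, take $p=q=3$ with $(d_{i_0},d_{i_0+1},d_{i_0+2})=(5,3,3)$ and $(d'_{j'_0+1},d'_{j'_0+2})=(3,3)$: all of your stated inequalities hold, yet the top-right block must receive row budgets $(2,0,0)$ while any nonzero column there needs weight $\ge 3$, which is impossible. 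What actually rules this out (and makes the construction work) is the \emph{two-dimensional} shift inequality $f(i'_0-s,j'_0+t)\le f(i'_0,j'_0)$; combined with monotonicity it yields $\lvert\{k\in[i_0,i'_0]:d_k\ge q+t\}\rvert\ge d'_{j'_0+t}$, which is the statement that the conjugate of the row-budget sequence dominates the needed column weights and makes the greedy staircase placement succeed. None of this appears in your sketch, and it is the crux of the achievability direction. Finally, your code construction is also looser than the paper's: after placing the extras, an off-core nonzero row of $M^*$ may carry strictly more than $d_k$ ones, so ``the one-dimensional span'' of that row does not have minimum distance $d_k$, and ``a small enlargement'' is not obviously well-defined; the paper avoids this by observing that an $[n,n-d_i+1,d_i]$ Reed--Solomon code contains a codeword of every weight in $[d_i,n]$ and then rescaling coordinates so the given row becomes a codeword.
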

\begin{proof}
Consider a minimum weight binary nonzero $m \times n$ matrix $M$
where every nonzero row $i$ has weight $\ge d_i$
and every nonzero column $j$ has weight $\ge d'_j$.
Because the sequence of $d_i$s and the sequence of $d'_j$s are sorted nonincreasingly,
we can permute the rows and columns of $M$
in such a way that 
the nonzero rows become rows $i, i + 1, \ldots, m$ for some $1 \le i \le m$
and the nonzero columns become columns $j, j + 1, \ldots, n$ for some $1 \le j \le n$
while still preserving the property that 
for each (nonzero) row $i'' \in [i,m]$ (column $j'' \in [j,n]$)  has weight $\ge d_{i''}$ ($\ge d_{j''}$).
Now, consider the submatrix of $M$ consisting of the intersection of rows $i, \ldots, m$ and columns $j, \ldots, n$. 
We want to minimize the number of ones in this submatrix.
Instead we look at
the problem of maximizing the number of zeros in this submatrix, which can be expressed 
as a max-flow problem where for each $i'' \in [i, m]$ there is an edge of capacity $n - j + 1 - d_{i''}$ from the source to a vertex that corresponds to row $i''$, 
for each $j'' \in [j,n]$ there is an edge of capacity $m - i + 1 - d'_{j''}$ from a vertex that corresponds to row $j''$ to the sink, and there is an edge of capacity 1 from each row $i''$ to each row $j''$.
Using the fact that the min-cut equals max-flow,
one can show that the minimum number of ones in this submatrix 
is indeed 
$$\max_{i - 1 \le i' \le m, \ j - 1 \le j' \le n} \ \  \sum_{k=i}^{i'} d_k + \sum_{k=j}^{j'} d'_k - (i' - i + 1) (j' - j + 1).$$
Since row $i$ has at least $d_i$ ones, we have $d_i \le n - j+ 1$. Similarly, $d_j \le  m - i + 1$.
Minimizing over all $i$ and $j$, we get that $D$ is the weight of $M$.

Now, we can deduce that for any two distinct codewords in the product code $\C$, since they differ on at least $d_i$ ($d'_j$) coordinates in every row $i$ (column $j$) in which they differ in at least one coordinate, they have Hamming distance $\ge D$.

Finally, assume the sequences $d_1, \ldots, d_m$ and $d'_1, \ldots, d'_n$ are given.
Find a weight-$D$ matrix $M \in \{0,1\}^{m \times n}$ where each nonzero row $i$ has weight $\ge d_i$ and 
each nonzero column $j$ has weight $\ge d'_j$.
For each zero row $i$, we define $C_i$ to be any linear code of minimum distance $d_i$.
Similarly, for each nonzero row $i$, we want to find a code $C_i$
of minimum distance $d_i$
such that row $i$ of matrix $M$ is a codeword in $C_i$.
An $[n,k=n-d_i+1, d_i]$-Reed-Solomon code has at least one codeword of weight $w$ for each $w \in [d_i, n]$ (because the degree-$(k-1)$ polynomial $(x - \alpha_1)^{k + w - n} (x - \alpha_2) \ldots (x - \alpha_{n-w})$ has exactly $w$ non-roots among distinct elements $\alpha_1, \ldots, \alpha_n$ of a field.)
Thus,  we can multiply each codeword coordinate of such a Reed-Solomon code by an appropriate nonzero field element in such a way that row $i$ of the zero-one matrix $M$ is a codeword in the resulting code $C_i$ of minimum distance $d_i$.
Similarly, 
For each zero column $j$, we define $C'_j$ to be any linear code of minimum distance $d'_j$.
Similarly, for nonzero columns $j$, we can find column codes $C'_j$ of appropriate minimum distance $d'_j$
such that row $j$ of matrix $M$ is a codeword in $C'_j$.
Finally, we need to choose the same symbol field for all these codes $C_i$ and $C'_j$. 
We can choose the field to be $\F_q$ for some $q \ge \max(m,n)$.
Then, the minimum distance of $\C = \C(\{C_i\}_i,\{C'_j\}_j)$ is $D$.

\end{proof}

\section{Asymptotic Analysis of Decoding Irregular Product Codes
on Erasure Channels}\label{asymptoticSection}
We need the following definition for the next theorem.
\begin{definition}\label{def:asymptotic}
Consider an $m \times n$ irregular product code $\C = \C(\{C_i\}_i,\{C'_j\}_j)$.
We are interested in the asymptotic behavior of $\C$, therefore
we think of $\C$ not individually but as one member of a family of irregular product codes
where $m$ and $n$ grow.
Suppose that 
$\alpha, \beta: [0,1] \rightarrow [0,1]$ are non-decreasing real functions.
We say that the row and column codes have asymptotic normalized minimum distance distribution $\alpha$ and $\beta$ if 
for every 
$\delta_1, \delta_2 > 0$, for large enough $m$ and $n$, for
each $i \in [m], j \in [n]$ we have $|\mbox{mindist}(C_i)/n - \alpha(x)| \le \delta_1$ for some $x$ such that $|1 - i/m - x| \le \delta_2$
and 
$|\mbox{mindist}(C'_j)/m - \beta(y)| \le \delta_1$ for some $y$ such that $|1 - j/n - y| \le \delta_2$.

\end{definition}
\begin{theorem}\label{thm:asymptoticDecoding}
Assume an $m \times n$ product code $\C = \C(\{C_i\}_i,\{C'_j\}_j)$ 
having asymptotic normalized minimum distance distribution $\alpha$ and $\beta$ as in Definition \ref{def:asymptotic}.
Assume that neither of $m$ or $n$ grows exponentially or faster in terms of the other one.
Consider that a codeword in $\C$ is sent over an erasure channel 
where each symbol is erased with probability $\epsilon > 0$.
We iteratively decode row codes and column codes of $\C$
whenever the number of erasures in a row or column is smaller than the minimum distance 
of the code corresponding to that row or column.
Assume that 
\begin{equation} \alpha^{-1}(\epsilon \beta^{-1} (\epsilon x)) < x \mbox{\ for all} \ x \in (0,1], \label{densityEvolutionInequality}\end{equation}
where we define $\beta^{-1}(x) = \sup(S_x)$ for $S_x = \{z \in [0,1]: \beta(z) \le x\}$ if $S_x \ne \emptyset$ and we define $\beta^{-1}(x) = 0$ if $S_x = \emptyset$. We define $\alpha^{-1}$ similarly.
Then for any constant $\delta_0 > 0$, for large enough codes in the family, 
all except a $\delta_0$-fraction of the symbols can be decoded 
except with a probability exponentially small in $\min(m,n)$. 

\end{theorem}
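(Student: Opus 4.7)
The plan is to analyze the iterative decoding by comparing it to a deterministic \emph{density evolution} recursion. Define $x_0 = y_0 = 1$ and
\[ y_{\ell+1} = \beta^{-1}(\epsilon x_\ell), \qquad x_{\ell+1} = \alpha^{-1}(\epsilon y_{\ell+1}). \]
The intuition is that $x_\ell$ (respectively $y_\ell$) is the asymptotic fraction of rows (columns) still undecoded after $\ell$ full iterations, each consisting of a column pass followed by a row pass. Indeed, if the still-erased symbols of row $i$ lie in the $y_{\ell+1}$-fraction of still-undecoded columns and each was originally erased with probability $\epsilon$, then the row sees about $\epsilon y_{\ell+1}\, n$ erasures and is decodable precisely when $\alpha(1 - i/m) > \epsilon y_{\ell+1}$, i.e., when $1 - i/m > \alpha^{-1}(\epsilon y_{\ell+1})$.

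I would first argue that the density evolution drops below any prescribed $\delta_0/4$ after a bounded number of iterations. By hypothesis (\ref{densityEvolutionInequality}), with $f(x) = \alpha^{-1}(\epsilon \beta^{-1}(\epsilon x))$, the sequence $(x_\ell)$ is strictly decreasing while positive and bounded below, hence converges to some $L \ge 0$. Passing to the limit in $x_{\ell+1} = f(x_\ell)$ using the right-continuity of $\alpha^{-1}$ and $\beta^{-1}$ (which holds whenever $\alpha,\beta$ are continuous on $[0,1]$, and can in general be obtained by an arbitrarily small perturbation preserving the strict inequality) yields $L = f(L)$, contradicting $f(L) < L$ unless $L = 0$. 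Hence some $\ell^* = \ell^*(\delta_0, \epsilon, \alpha, \beta)$ independent of $m,n$ satisfies $x_{\ell^*}, y_{\ell^*} < \delta_0/4$.

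Next I would couple the random decoding to this deterministic picture. Fix $\delta' > 0$ with $\ell^* \delta' < \delta_0/4$. By Chernoff plus a union bound over the $m + n$ rows and columns, the event $\mathcal{E}$ that every row has $(\epsilon \pm \delta')n$ originally erased entries and every column has $(\epsilon \pm \delta')m$ holds with probability $1 - e^{-\Omega(\min(m,n))}$; the assumption that $m,n$ do not grow exponentially in one another ensures this term continues to dominate the failure probabilities of subsequent steps. I would then induct on $\ell \le \ell^*$ to show that on $\mathcal{E}$, the sets $R_\ell, C_\ell$ of undecoded rows and columns after $\ell$ iterations satisfy $|R_\ell|/m \le x_\ell + \ell\delta'$ and $|C_\ell|/n \le y_\ell + \ell\delta'$. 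Setting $\ell = \ell^*$ then bounds the undecoded-entry fraction by $|R_{\ell^*}|/m + |C_{\ell^*}|/n < \delta_0$, proving the theorem.

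The main obstacle is this inductive step: given a bound on $|R_\ell|/m$, one needs every column $j$ with $\beta(1 - j/n)$ sufficiently above $\epsilon(x_\ell + \ell\delta')$ to decode at the next column pass, i.e., its erasures restricted to $R_\ell$ must number fewer than $d'_j$. Since $R_\ell$ is a random subset correlated with the column-$j$ erasures, the restricted count is not a clean Binomial, and a naive union bound over the $2^m$ possible subsets is hopeless. I would handle this by a monotonicity observation: because $d_1 \ge \cdots \ge d_m$ and every row sees the same per-column erasure probability in the undecoded columns, rows with smaller index decode no later than those with larger index, up to an $O(\delta' m)$ ``boundary'' of rows whose erasure counts sit within $O(\sqrt{n})$ of the decoding threshold. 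Consequently $R_\ell$ is contained in one of only polynomially many approximate suffixes of $[m]$, against which a union-bounded Chernoff controls the restricted erasure count in every column at once. The symmetric argument handles the row pass, closing the induction and establishing the theorem with the stated exponential failure probability.
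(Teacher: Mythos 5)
Your proposal follows essentially the same high-level strategy as the paper: set up the density-evolution recursion $x_{\ell+1} = \alpha^{-1}(\epsilon\beta^{-1}(\epsilon x_\ell))$, argue that hypothesis (\ref{densityEvolutionInequality}) together with right-continuity of $\alpha^{-1}, \beta^{-1}$ forces $x_\ell \to 0$ (via the same fixed-point contradiction the paper uses), and then couple the random iterative decoder to this deterministic sequence round by round with Chernoff bounds and a union bound over the $m+n$ component codes, using the assumption that $m,n$ are not exponentially large in one another. That part of your write-up matches the paper's proof almost verbatim.

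Where you diverge is in the concentration step, and here your proposal is more elaborate than necessary and has a soft spot. You worry that the undecoded-row set $R_\ell$ is correlated with the erasure pattern, so you propose a ``monotonicity'' argument that $R_\ell$ is approximately a suffix of $[m]$ and a union bound over polynomially many approximate suffixes. But the pointwise claim ``rows with smaller index decode no later than those with larger index'' is not actually true for a fixed erasure realization: a row with larger minimum distance may by chance receive more erasures and hence decode later, and controlling the resulting boundary set is delicate. The paper sidesteps this entirely with a cleaner \emph{dominance} argument: after the $\ell$-th column pass, the set of undecoded columns is, with probability $1 - e^{-\Omega(m)}$, contained in the \emph{fixed, deterministic} set $B_\ell$ of columns whose normalized minimum distance falls below the current density-evolution threshold (those are the only columns that could possibly fail). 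The number of surviving erasures in any row is therefore dominated by the number of erasures of that row restricted to $B_\ell$, which is a clean $\mathrm{Binomial}(|B_\ell|,\epsilon)$ independent of any conditioning on which specific rows or columns decoded. No enumeration of candidate subsets and no pointwise monotonicity is needed. You should replace your approximate-suffix union bound with this containment-in-a-fixed-set argument; with that substitution your proof aligns with the paper's.
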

\begin{proof}
Let $y = \beta^{-1}(\epsilon)$. Notice that $y$ only depends on $\beta$ and does not depend on $m$ and $n$. Let $1 \ge y' > y$, where we assume that $y'$ also is a number independent of $m$ and $n$.
We claim that for large enough $m$ and $n$, with very high probability all except the last $y'$-fraction of the columns can be decoded in the first step. 
To see this, let $y'' \in (y, y')$. We know $\beta(y'') > \epsilon$, hence for some $\epsilon' > \epsilon$, for large enough codes in the family, we have mindist$(C'_j) \ge \epsilon' m$ when $n - j \le y' n$, i.e., for the last $y'$-fraction of the columns. 
The probability that each of the length-$m$ codes corresponding to these columns cannot be decoded is exponentially small in $m$ by the Chernoff bound, since these codes can decode up to $\epsilon' m - 1$ erasures while we have on average $\epsilon m$ erasures. Since the number of columns does not grow exponentially in $m$, we can use a union bound to derive our claim.

Next we define $x_1 = \alpha^{-1}(\epsilon y)$. We claim that for any $1 \ge x' > x_1$ with very high probability, all except the last $x'$ fraction of rows can be decoded. To see this, let $x'' \in (x_1, x')$. 
We know $\alpha(x'') > \epsilon y$, hence $\alpha(x'') > \epsilon y''$ for some $y'' > y$. 
We can conclude that mindist$(C_i) \ge \epsilon y'' n$ when $m - i \le x' m$, i.e., for the last $x'$-fraction of the columns.
On the other hand, if we choose $y' \in (y, y'')$, by the previous paragraph with high probability all  the symbols not appearing in the last $y'$-fraction of the columns are decoded for large enough codes. Therefore, the average number of undecoded symbols at each row is at most $\epsilon y' n$. Again, we can derive our claim by a union bound on Chernoff bounds.

Repeating the above argument back and forth between rows and columns, we get a non-increasing sequence $x_0 = 1, x_1, x_2, \ldots$ where $x_{i+1} = \alpha^{-1}(\epsilon \beta^{-1} (\epsilon x_i))$.
Here $1 - x_i$ denotes the approximate fraction of rows that are guaranteed to be decoded after $i$ back-and-forth rounds of decoding.
If this sequence converges to 0, then $\delta_0 > x_i$ for some $i$. 
That would mean that with high probability, at most a $\delta_0$-fraction of the rows and hence at most a $\delta_0$-fraction of all the symbols are not decoded by the end of the algorithm. 

So we just need to check that the monotonic sequence $x_0, x_1, x_2, \ldots$ converges to 0 if condition (\ref{densityEvolutionInequality}) is satisfied.
Assume otherwise that $x^* = \lim_{i \rightarrow \infty} x_i > 0$. 
We have $$\alpha^{-1}(\epsilon \beta^{-1} (\epsilon x^*)) = \alpha^{-1}(\epsilon \beta^{-1} (\epsilon \lim_{i \rightarrow \infty} x_i)) = \lim_{i \rightarrow \infty} \alpha^{-1}(\epsilon \beta^{-1} (\epsilon x_i)) = x^*$$
because $\alpha^{-1}$ and $\beta^{-1}$ can be shown to be right-continuous.
This contradicts condition (\ref{densityEvolutionInequality}).
\end{proof}

\begin{comment}
because $\alpha^{-1}(x)$ and $\beta^{-1}(y)$ are right-continuous.
Then 

Therefore, the sequence coverages to 0 if
for all $x \in (0,1]$, we have $a^{-1}(\epsilon b^{-1} (\epsilon x)) < x$. 
We can relax this strict inequality condition into 
$$a^{-1}(\epsilon b^{-1} (\epsilon x)) \le x  \mbox{\ for all} \ x \in [0,1],$$
since we can always get the same erasure probability $\epsilon - \delta$ for an $\epsilon$ which is a little smaller by making $\delta$ a little smaller.
Finally, we can simplify the condition as
$$\epsilon b^{-1} (\epsilon x)) \le a(x) \mbox{\ for all} \ x \in [0,1].$$ 
\end{comment}

\section{Irregular Product Codes from MDS Codes}\label{sec:Cap}
\begin{proposition}\label{rateIrregularProductCodesfromMDS}
Consider an irregular product code 
$\C = \C(\{C_i\}_i,\{C'_j\}_j)$ 
where
$C_i$ is an $[n,a_i]$-MDS code and $C'_j$ is an $[m, b_j]$-MDS code for all $i, j$.
If $a_1, \ldots, a_m$ and $b_1, \ldots, b_n$ are non-decreasing sequences,
then the dimension of $C$ is upper-bounded by formula (\ref{eqn:dim}).

Furthermore, given any two integer sequences $0 \le a_1 \le \ldots \le a_m \le n$ and $0 \le b_1 \le \ldots \le b_n \le m$, we can meet this upper-bound in the following way.
Choose $n$ distinct elements $\alpha_1, \ldots, \alpha_n$ and $m$ distinct elements $\beta_1, \ldots, \beta_m$ of the symbol field $\F$.
Let $V$ be the $a_m \times n$ Vandermonde matrix $V_{ij} = \alpha_j^{i-1}$ and $V'$ be the $b_n \times m$ Vandermonde matrix $V'_{ij} = {\beta_j}^{i-1}$.
Let $C_i$ 
be the Reed-Solomon code having as generator matrix the first $a_i$ rows of the matrix $V$ and 
$C'_j$ be the Reed-Solomon code having as generator matrix the first $b_j$ rows of $V'$. 
Then the dimension of $C$ is given exactly by formula (\ref{eqn:dim}).
\end{proposition}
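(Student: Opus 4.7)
The plan is to deduce both parts of the proposition directly from Theorem \ref{rateTheorem}, since most of the combinatorial work has already been done there; what remains is to verify that MDS codes satisfy the hypotheses of that theorem and that the Vandermonde construction produces a nested family.

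For the upper-bound statement, I would argue as follows. An $[n,a_i]$-MDS code has the property that any $a_i$ of its coordinates (in particular the first $a_i$) determine a codeword completely, because the projection of the code onto any $a_i$ coordinates is a bijection with $\F^{a_i}$. The analogous statement holds for $C'_j$ with its first $b_j$ coordinates. Thus the integer sequences $a_1 \le \cdots \le a_m$ and $b_1 \le \cdots \le b_n$ satisfy the hypotheses of part 1 of Theorem \ref{rateTheorem}, and the upper bound (\ref{eqn:dim}) follows immediately.

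For the matching construction, the key observation is that the generator matrix of $C_i$ is the first $a_i$ rows of the fixed Vandermonde matrix $V$, so $C_i$ is exactly the classical Reed-Solomon code evaluating polynomials of degree $< a_i$ at $\alpha_1,\ldots,\alpha_n$. This has dimension $a_i$ and is MDS, and since $a_1 \le \cdots \le a_m$, the nesting $C_1 \subseteq C_2 \subseteq \cdots \subseteq C_m$ is built in: enlarging the polynomial-degree bound can only enlarge the code. The same reasoning applied to $V'$ gives Reed-Solomon column codes $C'_j$ of dimension $b_j$ with $C'_1 \subseteq \cdots \subseteq C'_n$. All hypotheses of part 2 of Theorem \ref{rateTheorem} are therefore met (linearity, correct dimensions, nesting, and the fact that an MDS code of dimension $a_i$ is generated by its first $a_i$ coordinates), and that theorem directly yields $\dim \C = k_\C$.

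The only real subtlety to flag is choosing a symbol field large enough for both Vandermonde matrices to make sense simultaneously, i.e., $|\F| \ge \max(m,n)$, so that one can pick $n$ distinct $\alpha_j$ and $m$ distinct $\beta_j$; this is essentially the standard field-size cost of using Reed-Solomon codes as components. Beyond that, no obstacle arises: Theorem \ref{rateTheorem} already contains the substantive argument, and the proposition is really a statement that MDS component codes arranged in a Vandermonde-nested family automatically meet its hypotheses.
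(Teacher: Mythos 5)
Your proof is correct and follows essentially the same route as the paper: invoking the MDS property to satisfy the hypotheses of part 1 of Theorem~\ref{rateTheorem} for the upper bound, and observing that the Vandermonde (Reed--Solomon) construction gives nested codes of the right dimensions to invoke part 2. The remark about requiring $|\F|\ge\max(m,n)$ is a sensible clarification the paper leaves implicit.
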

\begin{proof}
Any $a_i$ coordinates of $C_i$ (and in particular the first $a_i$ coordinates) can generate the rest of the coordinates. Similarly, the first $b_j$ coordinates of $C'_j$ generate the rest of the coordinates. Therefore, by Theorem~\ref{rateTheorem}, the rate of the irregular product code is upper-bounded by formula (\ref{eqn:dim}).

Now assume $C_i$ has as generator matrix the first $a_i$ rows of the Vandermonde matrix $V$.
Then for $i < i'$ we have $a_i \le a_{i'}$, hence $C_i$ has a generator matrix which is a submatrix of a generator matrix of $C_{i'}$, and hence $C_i$ is a subcode of $C_{i'}$.
We can argue similarly about the column codes. 
This shows that by Theorem~\ref{rateTheorem}, the exact rate is given by formula (\ref{eqn:dim}).
\end{proof}

\begin{theorem}\label{genericConstruction}
For each $\epsilon > 0$, the following is a generic way of constructing families of irregular product codes with asymptotic rate $1 - \epsilon$ such that for any constant $\delta > 0$ one can decode almost all the symbols of a codeword sent over an erasure channel having erasure probability $\epsilon - \delta$:

Choose any non-decreasing function $\beta:[0,1]\rightarrow[0,1]$ with $\beta(1) \le \epsilon$
and $\lim_{y \rightarrow 0} \beta(y) = 0$.
Define $\alpha:[0,1]\rightarrow[0,1]$ by $\alpha(x) = \epsilon \beta^{-1}(\epsilon x)$
where $\beta^{-1}$ is defined in Theorem~\ref{thm:asymptoticDecoding}.
Choose $m$ and $n$ as you wish but neither of $m$ or $n$ should grow exponentially or faster in the other one.
Then choose
$0 \le a_1 \le \ldots \le a_m \le n$ and $0 \le b_1 \le \ldots \le b_n \le m$
as you wish but in such a way that $a_i = n (1 - \alpha(1 - i/m + o(1)) + o(1))$
and $b_j = m (1 - \beta(1 - j/n + o(1)) + o(1))$.
Then choose the row codes $C_i$ to be nested linear MDS codes of dimension $a_i$ (for example as in Proposition \ref{rateIrregularProductCodesfromMDS}). Choose similarly column codes $C'_j$ of dimension $b_j$.
\begin{figure}
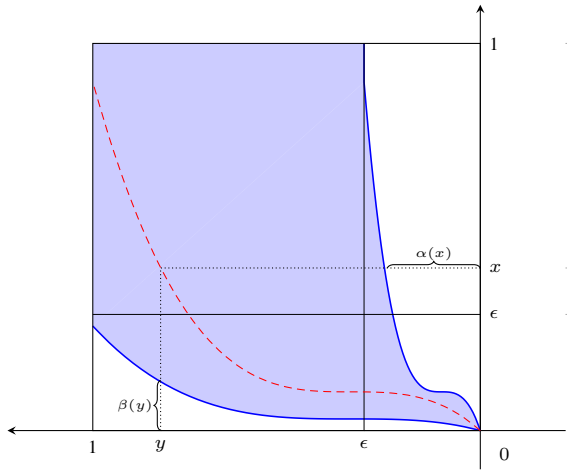

\centering
\inputtikz{asym}
\caption{\label{fig:as}This figure shows how the curve for $\alpha$ is obtained from the curve for $\beta$ in Theorem~\ref{genericConstruction}. We stretch the curve for $\beta$ vertically by a factor of $1/\epsilon$, and then shrink the curve for $\alpha$ horizontally by a factor of $\epsilon$. That is, whenever $x = \beta(y)/\epsilon$, we have $\alpha(x) = \epsilon y$. 
The area of the shaded region denotes the asymptotic rate of the code, which is $1 - \epsilon$.}
\end{figure}
\end{theorem}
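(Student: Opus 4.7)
The plan is to verify the two assertions of the theorem separately: that the asymptotic rate equals $1-\epsilon$, and that Theorem~\ref{thm:asymptoticDecoding} applies with the actual erasure probability $\epsilon-\delta$. Proposition~\ref{rateIrregularProductCodesfromMDS} already tells me that the dimension of $\C$ is given exactly by formula~(\ref{eqn:dim}), since the row and column codes are chosen to be nested MDS codes of the prescribed dimensions.

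For the rate, I will rewrite formula~(\ref{eqn:dim}) by grouping per row as $\sum_i \max(a_i - j(i)+1,\,0)$, where $j(i):=\min\{j : b_j \ge i\}$. The defining condition on $b_j$ yields $j(i)/n \to 1 - \beta^{-1}(1 - i/m) + o(1)$, which together with $a_i/n \to 1 - \alpha(1 - i/m) + o(1)$ makes the normalized dimension tend to
\[
\int_0^1 \max\bigl(\beta^{-1}(u) - \alpha(u),\,0\bigr)\,du.
\]
Because $\alpha(u)=\epsilon\,\beta^{-1}(\epsilon u)\le\beta^{-1}(u)$, the maximum is vacuous; splitting the integral and substituting $w=\epsilon u$ in the $\alpha$-piece produces $\int_\epsilon^1 \beta^{-1}(u)\,du$. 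The hypothesis $\beta(1)\le\epsilon$ forces $\beta^{-1}(u)=1$ for every $u\ge\epsilon$, so this evaluates to $1-\epsilon$, which is exactly the shaded region of Figure~\ref{fig:as}.

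For decoding, the MDS hypothesis gives $\mathrm{mindist}(C_i)=n-a_i+1$ and $\mathrm{mindist}(C'_j)=m-b_j+1$, so the asymptotic normalized minimum distance distributions of rows and columns are $\alpha$ and $\beta$, respectively. Invoking Theorem~\ref{thm:asymptoticDecoding} with $\epsilon$ replaced by $\epsilon':=\epsilon-\delta$ reduces the proof to checking the density-evolution inequality $\alpha^{-1}(\epsilon'\beta^{-1}(\epsilon' x))<x$ for all $x\in(0,1]$. Unpacking the supremum definition of $\alpha^{-1}$, this is equivalent to $\alpha(x^-)>\epsilon'\beta^{-1}(\epsilon' x)$, where $\alpha(x^-)$ denotes the left-limit, and I will establish it via the chain
\[
\epsilon'\beta^{-1}(\epsilon' x)\;\le\;\epsilon'\,\beta^{-1}\bigl((\epsilon x)^-\bigr)\;<\;\epsilon\,\beta^{-1}\bigl((\epsilon x)^-\bigr)\;=\;\alpha(x^-).
\]

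The main obstacle I foresee lies in the strict middle inequality: because $\beta$ is allowed to be any non-decreasing function, it may have flat stretches that translate into left-jumps of $\beta^{-1}$ and $\alpha$, so I have to check that the slack $\delta$ actually beats these jumps. The rescue is that the slack enters multiplicatively and that $\beta^{-1}((\epsilon x)^-)>0$ for every $x>0$: the assumption $\lim_{y\to 0}\beta(y)=0$ guarantees that for every $w>0$ some $z>0$ satisfies $\beta(z)\le w$, forcing $\beta^{-1}(w)>0$ and hence $\beta^{-1}((\epsilon x)^-)\ge\beta^{-1}(\epsilon x/2)>0$. Once the density-evolution inequality is in hand, Theorem~\ref{thm:asymptoticDecoding} delivers the claimed decoding guarantee.
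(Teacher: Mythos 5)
Your proposal is correct. The rate computation follows the paper essentially verbatim: rewrite formula~(\ref{eqn:dim}) per row, pass to the integral $\int_0^1 \max(\beta^{-1}(u)-\alpha(u),0)\,du$, observe the max is vacuous since $\alpha(u)=\epsilon\beta^{-1}(\epsilon u)\le\beta^{-1}(u)$, substitute, and use $\beta(1)\le\epsilon$ to evaluate $\int_\epsilon^1\beta^{-1}=1-\epsilon$. For the decoding part, however, you take a genuinely different route than the paper. The paper argues by contradiction: it assumes $\alpha^{-1}(\epsilon'\beta^{-1}(\epsilon' x))\ge x$, uses the sup-definition of $\alpha^{-1}$ to get $\epsilon\beta^{-1}(\epsilon x')\le\epsilon'\beta^{-1}(\epsilon' x)$ for all $x'<x$, then observes that for $x'\ge\epsilon' x/\epsilon$ this forces $\delta\beta^{-1}(\epsilon x')\le 0$, hence $\beta^{-1}$ vanishes on an interval, contradicting $\lim_{y\to0}\beta(y)=0$. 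You instead prove the strict inequality directly by reformulating $\alpha^{-1}(z)<x$ as $\alpha(x^-)>z$ and exhibiting the chain $\epsilon'\beta^{-1}(\epsilon' x)\le\epsilon'\beta^{-1}((\epsilon x)^-)<\epsilon\beta^{-1}((\epsilon x)^-)=\alpha(x^-)$, with the strict middle step saved by the multiplicative slack and positivity of $\beta^{-1}$ on $(0,1]$. Both hinge on the same two facts (the gap $\epsilon'<\epsilon$ and $\lim_{y\to0}\beta(y)=0\Rightarrow\beta^{-1}>0$ on $(0,1]$), but your direct argument makes the role of the slack $\delta$ more visible, at the cost of explicitly manipulating one-sided limits; the paper's contradiction argument avoids left-limits but is somewhat more opaque about where $\delta$ bites. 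One small point worth stating explicitly in your write-up: $\alpha$ is automatically non-decreasing (being $\epsilon$ times a composition of non-decreasing functions), which is needed both for Definition~\ref{def:asymptotic} and for your characterization of $\alpha^{-1}(z)<x$.
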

\begin{proof}
For MDS codes $C_i$, we have mindist$(C_i) = n - a_i + 1$.
Therefore the minimum distance of the row codes can be approximated correctly
using $\alpha$ as
in the statement of Theorem~\ref{thm:asymptoticDecoding}.
We can do similarly for the column codes.
In order to show we can decode from $\epsilon - \delta > 0$ fraction of errors, 
it is enough by Theorem~\ref{thm:asymptoticDecoding}
to check that $$\alpha^{-1}((\epsilon - \delta)\beta^{-1}((\epsilon - \delta)x)) \ge x$$ 
does not happen for any $x \in (0, 1]$. If it does for some $x$, then for any $0 \le x' < x$, we have $$\epsilon \beta^{-1}(\epsilon x') = \alpha(x') \le (\epsilon - \delta)\beta^{-1}((\epsilon - \delta)x).$$
Since $\beta^{-1}(\epsilon x') \ge \beta^{-1}((\epsilon - \delta) x)$
for all $x' \in [(\epsilon - \delta)x/\epsilon, x)$,
this implies $\beta^{-1}(\epsilon x') = 0$ for all such $x'$.
This implies $\beta^{-1}(\epsilon x') = 0$ for all $x' \in (0, x)$ and this contradicts 
$\lim_{y \rightarrow 0} \beta(y) = 0$.

Now we want to calculate the asymptotic rate. By Theorem \ref{rateIrregularProductCodesfromMDS}, the dimension of the code is given by formula (\ref{eqn:dim}), which can be expressed 
as $$ mn \ \mathbb{E}_i[ \max(\frac{a_i - \max(\{j: b_j < i\})}{n}, 0)].$$ 
Thus, the rate is asymptotically equal to 
\begin{eqnarray*}
\int_{x =0}^1 \max(\beta^{-1}(x) - \alpha(x), 0) \ dx & = & \int_{x = 0}^1 (\beta^{-1}(x) - \epsilon \beta^{-1}(\epsilon x)) \ dx \\
& = & \int_{x=0}^1 \beta^{-1}(x) \ dx - \epsilon \int_{x=0}^1 \beta^{-1}(\epsilon x) \ dx \\
& = & 
[(1 - \epsilon) + \int_{x=0}^\epsilon \beta^{-1}(x) \ dx] - \int_{x=0}^\epsilon \beta^{-1}(x) \ dx \\
& = &1 - \epsilon.
\end{eqnarray*}
\end{proof}

We note that the only regular products codes based on MDS codes which have
decoding properties asymptotically as good as those constructed in
Theorem~\ref{genericConstruction} are codes where either $\alpha = 0$ or $\beta
= 0$.
These are regular product codes in which the row codes or column codes have rate
$1 - o(1)$.

\section{Examples of Finite-Length Irregular Product Codes}\label{sec:Finite}
In order to find an example of an irregular product code for finite but not so
small lengths, say $50 \times 50$,
we used the asymptotic irregular product code shown in
Figure~\ref{fig:straight-Line-asymptotic}
obtained from Theorem~\ref{genericConstruction}, in which $\alpha(x) =
\epsilon x$ and $\beta(y) = \epsilon y$ where $\epsilon$ is the erasure
probability. 
The area of the shaded region, which represents the systematic part of the code,
is the rate $1 - \epsilon$ of the code.

%\begin{figure}[h!]
%\centering
%\begin{tikzpicture}[scale=.4]
%\draw(0,0) rectangle (10,10);
%\draw[pattern=dots]  (7,10)  -- (10,0) --  (0,3) -- (0,10);
%\node (c) at (4,10.3) {$1-\epsilon$};
%\node (c) at (8.5,10.3) {$\epsilon$};
%\node[rotate=90] (c) at (-.3,6) {$1-\epsilon$};
%\node[rotate=90] (c) at (-.3,1.5) {$\epsilon$};
%\end{tikzpicture}
%\caption{The shaded region corresponds to
%the systematic or information part of the
%code by choosing $\alpha(x) =
%\epsilon x$ and $\beta(y) = \epsilon y$ where $\epsilon$ is the erasure
%probability. \label{fig:straight-Line-asymptotic}}
%\end{figure}

Next we slightly tuned the asymptotic code to a $50 \times 50$
irregular product code in such a way that 
\begin{itemize}
 \item the code
can start decoding better, by increasing the number of row and
column codes
having the highest minimum distance by a few;
\item more importantly, the code has a much higher probability of decoding all
symbols once most of the symbols have been decoded,
by forcing that the minimum distances of all row and column codes are 
at least some positive number, in this case 3.
\end{itemize}
We chose all row codes and all column codes to be nested MDS codes according to
Theorem~\ref{rateIrregularProductCodesfromMDS}.
The resulting code has a systematic part which is shown in
Figure~\ref{fig:50x50systematic}.

\begin{figure}
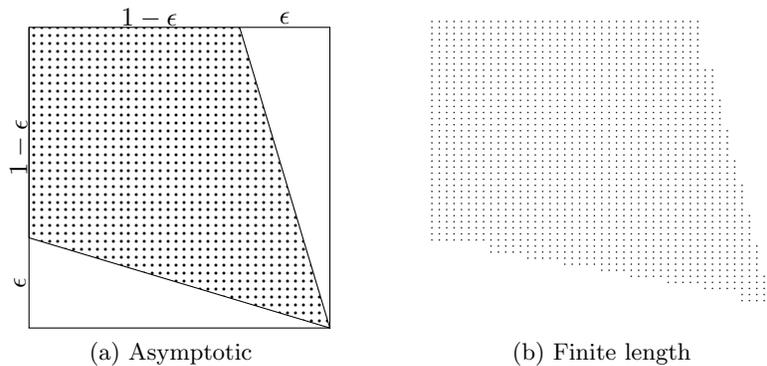

  \centering
\subfloat[Asymptotic]{\label{fig:straight-Line-asymptotic}\inputtikz{asymline}{}
}
\qquad
  \subfloat[Finite length]{\label{fig:50x50systematic}\inputtikz{50x50}{}}
  \caption{a) The shaded region corresponds to
the systematic or information part of the
code by choosing $\alpha(x) =
\epsilon x$ and $\beta(y) = \epsilon y$ where $\epsilon$ is the erasure
probability. b) Systematic part of a $[2500,1709]$ irregular
product}
  \label{fig:line}
\end{figure}

%\begin{figure}
%\renewcommand{\baselinestretch}{0.8}
% \fvset{fontsize=\tiny}
%\VerbatimInput{figs/50x50.txt}
%  \caption{Systematic part of a $[2500,1709]$ irregular product
%code\label{fig:50x50systematic}}
%\renewcommand{\baselinestretch}{1.5}
%\end{figure}

This code is a $[2500,1709]$ code of rate $0.6836$. We
compared this code to all regular product codes having rates $[0.6708, 0.684]$.
Note that most of these codes have rate even lower than this code. 
The result of the simulation is shown in Figure~\ref{fig:sim50x50}.
This plot shows the block/word error rate of the code in an erasure channel with
erase probability $\epsilon$. All constituent row and column codes in
irregular and regular cases are considered to be MDS with the corresponding
dimensions. Each point of these curves is obtained by $10^6$ simulations. The
erasure patterns for different values of $\epsilon$ have been coupled such that
the block error versus erasure probability curve is monotonic. One can see that
this code outperforms all product codes having lower rates.

\begin{figure}
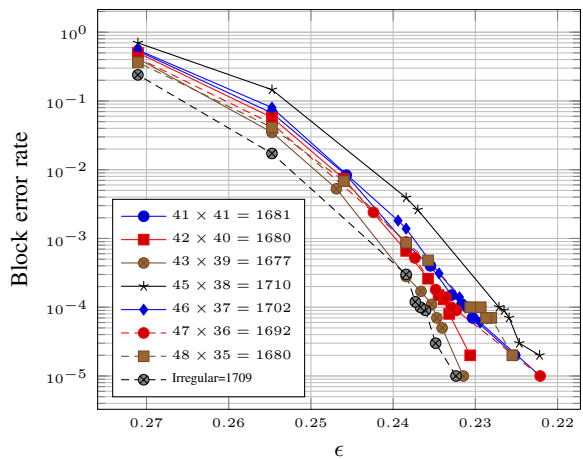

\centering
\inputtikz{n50r68}
\caption{Comparing an irregular $[2500,1709]$ code to
almost equal rate regular ones. The number on the legends indicate
the corresponding row and column dimensions.\label{fig:sim50x50}}
\end{figure}

Figure~\ref{fig:8x8} shows another case where an irregular code outperforms a
regular code for a much smaller length. We compared a regular~$[8 \times 8, 4 \times 7]$ product code
 with an irregular code which is shown in
Figure~\ref{fig:8x8K28shape}. Numbers on rows and columns indicate the
dimension of the corresponding row and column MDS code. The block error
probability of these codes is shown in Figure~\ref{fig:8x8K28ber}. Both codes
are $[64,28]$ codes.
\begin{figure}
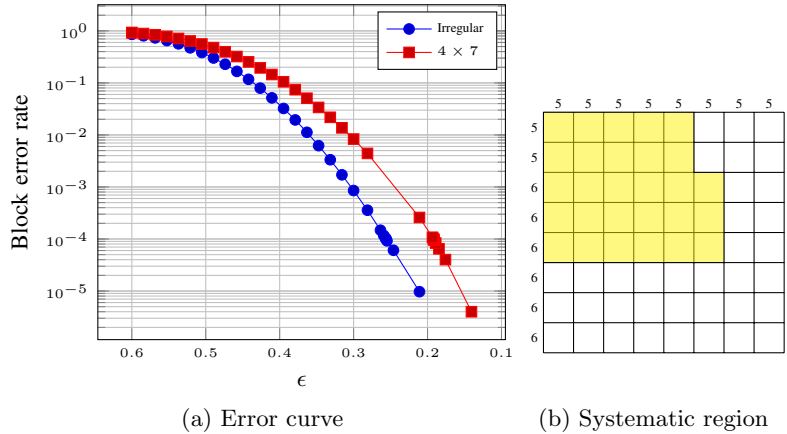

  \centering
  \subfloat[Error curve]{\label{fig:8x8K28ber}\inputtikz{8x8K28}{}}
  \subfloat[Systematic region]{\label{fig:8x8K28shape}\inputtikz{8x8K28shape}{}}
  \caption{Comparing an $8 \times 8$ regular and irregular product codes both
with dimension $28$}
  \label{fig:8x8}
\end{figure}

%\section{Conclutions and Future Work}
%A natural extension to irregular product codes would be irregualr tensor codes
%where the codes are arranged in a $t$-tensor where $t\geq3$. 
% Finite length analysis
% Sequence optimization

\bibliographystyle{plain}
\bibliography{ipc}
\end{document}